\documentclass[12pt,nopagetitle]{article}
\usepackage{amsfonts, amsmath, amssymb, amsthm, enumitem}  
\usepackage[english]{babel}
\usepackage[utf8]{inputenc}
\usepackage[OT1]{fontenc}
\usepackage{bm}


\oddsidemargin=0mm \topmargin=0mm \headheight=0mm \headsep=0mm
\textwidth=160mm \textheight=246mm
\usepackage{textcomp, cmap, comment}	
\usepackage{graphicx, wrapfig}
\usepackage{epigraph, xcolor, hyperref}
\pagestyle{empty}
\usepackage{array,tabularx,tabulary,booktabs}
\usepackage{euscript, mathrsfs}	

\newcounter{iii}

\newcommand{\R}{{\mathbb R}}

\theoremstyle{plain}
\newtheorem{thm}{Theorem}

\newtheorem{cla}{Claim}

\theoremstyle{definition}

\title{The VC-dimension of $k$-vertex $d$-polytopes}
\author{Andrey Kupavskii\footnote{MIPT, Moscow, IAS, Princeton and CNRS, Grenoble; Email: {\tt kupavskii@ya.ru}\ \  The author acknowledges the financial support from the Ministry of Education and Science of the Russian Federation in the framework of MegaGrant no 075-15-2019-1926.}}
\date{}

\begin{document}
\maketitle
\begin{abstract}
  In this short note, we show that the VC-dimension of the class of $k$-vertex polytopes in $\R^d$ is at most $8d^2k\log_2k$, answering an old question of Long and Warmuth.
\end{abstract}

Let $2^X$ stand for the power set of $X$. Let $\mathcal F\subset 2^{X}$ be a collection of subsets of a set $X$. 
In computational geometry, a pair $(X, \mathcal F)$ is called a {\it range space}. In learning theory, the sets from $\mathcal R$ are called {\it concepts} and $\mathcal R$ is called a {\it concept class}. We say that a set $Y\subset X$ is {\it shattered} by $\mathcal F$ if for any $Y'\subset Y$ there is $F'\in \mathcal F$ such that $F'\cap Y = F'$. The {\it VC-dimension} of a collection of sets is the size of the largest shattered subset $Y$ of $X$. It was defined in a seminal paper of Vapnik and Chervonenkis
\cite{VC}. The VC-dimension is one of the fundamental complexity characteristics of a collection of sets and has numerous applications across different branches of mathematics, such as combinatorics, graph theory, computational geometry, learning theory, model theory etc.

It is typical to consider set systems, induced on a set $X$ of all points in some space by all objects from a certain geometrically defined class. We shall omit $X$ from the notation when it is clear from the context. One crucial example is the class of all half-spaces in $\mathbb R^d$. The VC-dimension of this class is $d+1$, a statement that is essentially equivalent to the classical theorem of Radon in convex geometry (see \cite{Mat}). To extend this to polytopes, we need one general result of Blumer, Ehrenfeucht, Haussler and Warmuth
\cite{BEHW}.
\begin{thm}[\cite{BEHW}]
  Let $\mathcal R\subset 2^{X}$ be a collection of sets of VC-dimension $t$. Let $\mathcal R^{\cap k}$ be the collection of all sets that can be obtained as intersections of at most $k$ sets from $\mathcal R$. Then the VC-dimension of $\mathcal R^{\cap k}$ is $O(tk\log k)$.
\end{thm}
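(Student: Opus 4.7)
The plan is to combine the Sauer--Shelah--Perles lemma with the trivial observation that intersection commutes with restriction to a subset. Fix a finite $Y\subset X$ with $|Y|=n$, and let $\mathcal R|_Y := \{R\cap Y : R\in \mathcal R\}$ denote the trace of $\mathcal R$ on $Y$. Since $\mathcal R$ has VC-dimension $t$, Sauer--Shelah gives
\[
  |\mathcal R|_Y| \le \sum_{i=0}^{t}\binom{n}{i} \le \left(\frac{en}{t}\right)^{t}
\]
for $n\ge t$. This is the only structural fact about $\mathcal R$ that I will use.

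Next I would estimate $|\mathcal R^{\cap k}|_Y|$. For any $R_1,\dots,R_k\in\mathcal R$, the trace of $R_1\cap\cdots\cap R_k$ on $Y$ equals $(R_1\cap Y)\cap\cdots\cap (R_k\cap Y)$. Hence every trace in $\mathcal R^{\cap k}|_Y$ is determined by an unordered choice of at most $k$ elements of $\mathcal R|_Y$, which yields the crude bound
\[
  |\mathcal R^{\cap k}|_Y| \le |\mathcal R|_Y|^{k} \le \left(\frac{en}{t}\right)^{tk}.
\]

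If $Y$ is shattered by $\mathcal R^{\cap k}$ then $|\mathcal R^{\cap k}|_Y|\ge 2^{n}$, so
\[
  n \le tk\log_2\!\left(\frac{en}{t}\right).
\]
Here the only work left is to extract $n=O(tk\log k)$ from this implicit inequality. The standard bootstrap works: set $n = C\, tk\log k$ for a constant $C$ to be chosen; then the right-hand side becomes $tk\log_2(eCk\log k) = tk\cdot O(\log k)$, and choosing $C$ large enough makes the inequality fail, which forces $n < C\,tk\log k$.

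There is no real obstacle in this argument; the only point that needs care is the last inequality manipulation, which is a routine (but necessary) bootstrap. The whole proof is essentially two lines of counting on top of Sauer--Shelah, and the logarithmic factor is inherent in this approach because we allowed an ordered $k$-tuple of sets from $\mathcal R|_Y$ rather than exploiting any finer structure of the intersections.
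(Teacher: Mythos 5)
Your argument is correct, and it is the standard proof of this theorem: the paper itself gives no proof (the result is quoted from Blumer--Ehrenfeucht--Haussler--Warmuth), and your route --- bound the trace of $\mathcal R^{\cap k}$ on a candidate shattered set $Y$ by $|\mathcal R|_Y|^k$, apply Sauer--Shelah to bound $|\mathcal R|_Y|$, and close with the bootstrap $n\le tk\log_2(en/t)\Rightarrow n=O(tk\log k)$ --- is essentially the original one. Two pedantic points you may wish to smooth out: the bound $|\mathcal R^{\cap k}|_Y|\le|\mathcal R|_Y|^k$ is cleanest if you count ordered $k$-tuples with repetition (an intersection of fewer than $k$ sets is realized by repeating one of them), since the number of \emph{unordered} choices of at most $k$ traces is $\sum_{j\le k}\binom{m}{j}$, which can exceed $m^k$ in degenerate cases; and the Sauer--Shelah estimate $(en/t)^t$ needs $n\ge t$, which is harmless because $n<t$ already gives the conclusion. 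It is worth noting that your counting-on-a-shattered-set scheme is structurally the same as the argument the paper uses for its main theorem, with sign patterns of polynomials playing the role that traces of $\mathcal R|_Y$ play here, and that the $\log k$ factor your proof produces is known to be necessary for $d\ge 4$ by the lower bound of Csik\'os, Kupavskii and Mustafa cited in the paper.
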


Using this theorem and Radon's theorem, we conclude that the VC-dimension of the class of all polytopes in $\mathbb R^d$ with at most $k$ faces is $O(dk\log k)$. In the paper \cite{CKM}, Csik\'os,  Mustafa and the author showed that this bound is tight up to a constant factor for any $d\ge 4$. We refer to \cite{CKM} for more background on the topic.

Surprisingly, very little is known about the dual (in the geometric sense) class of polytopes in $\mathbb R^d$ with at most $k$ vertices. The famous Upper Bound Theorem (see \cite{Mat}) implies that a polytope with $k$ vertices can have at most $O(k^{\lfloor d/2\rfloor})$ faces, which gives an upper  bound of roughly $d^2k^{d/2}$ on the VC-dimension of this class.
Long and Warmuth \cite{LW} asked 30 years ago whether the VC-dimension of this class is polynomial in $d$. In this note, we answer this question in the positive.

\begin{thm} The VC-dimension of the class of polytopes with at most $k$ vertices in $\mathbb R^d$ is at most $8d^2k\log_2 k$.
\end{thm}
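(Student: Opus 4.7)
My plan is to view a $k$-vertex polytope $P = \text{conv}(v_1,\ldots,v_k)$ in $\R^d$ as being parametrized by its vertex coordinates, a point in $\R^{kd}$, and to bound the VC-dimension by bounding the number of sign patterns that a concrete family of polynomials in these $kd$ variables can realize. The main tool will be Warren's sign-pattern theorem, applied with $m := kd$ variables; the use of BEHW is not going to be efficient here because a $k$-vertex polytope can have $\Omega(k^{d/2})$ facets.

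Fix an allegedly shattered set $Y = \{y_1,\ldots,y_n\} \subset \R^d$. By Carath\'eodory's theorem, a test point $y_j$ lies in $\text{conv}(v_1,\ldots,v_k)$ if and only if it lies in the simplex $\text{conv}(v_i : i \in I)$ for some $(d+1)$-subset $I \subseteq [k]$. Membership in a fixed simplex is in turn determined by the signs of the $d+2$ standard barycentric determinants of size $(d+1) \times (d+1)$: one whose columns are the $v_i$'s in $I$ (augmented by a row of $1$'s), and $d+1$ more in which $y_j$ replaces each $v_{i_\ell}$. Each such determinant, viewed as a function of $v_1,\ldots,v_k$, is a polynomial of degree at most $d+1$ in the $kd$ vertex coordinates. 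Running through all $j \in [n]$ and all $\binom{k}{d+1}$ subsets $I$ produces a family of at most $N = (d+2)\binom{k}{d+1}\cdot n$ polynomials of degree $\le d+1$ in $m = kd$ variables, and the sign vector of this family at any $v \in \R^{kd}$ completely determines $\{j : y_j \in P(v)\}$.

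Because $Y$ is shattered, these polynomials must realize at least $2^n$ distinct sign vectors on $\R^{kd}$. Warren's theorem gives an upper bound of the form $\bigl(c\,N(d+1)/m\bigr)^{m}$ for an absolute constant $c$; substituting $\binom{k}{d+1}\le (ek/(d+1))^{d+1}$ turns this into a bound of the shape $(c'\,n\,k^{d})^{kd}$. Taking logarithms base $2$ then yields the implicit inequality $n \le kd\log_2(c'nk^d)$, and a standard bootstrap (first deriving $n = O(kd^2 \log k)$ and reinserting this to absorb the $kd\log_2 n$ term) leads to the desired bound $n \le 8d^2k\log_2 k$.

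The conceptual part of the argument is routine Carath\'eodory plus Warren; the actual work is quantitative. To land on the specific constant $8$ one must use a sharp form of Warren's inequality, control $\binom{k}{d+1}(d+2)$ so that the base of the exponential is not much larger than $k^{d}$, and carry out the bootstrap cleanly. A minor technicality that needs dispatching is the degenerate case where vertices coincide or the polytope is lower-dimensional, which is harmless since one can always pad to exactly $k$ vertices by repetition without changing the convex hull.
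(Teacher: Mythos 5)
Your proposal follows essentially the same route as the paper: parametrize the polytope by its $kd$ vertex coordinates, reduce membership to $(d+1)$-point simplices via Carath\'eodory, encode each simplex-membership test by signs of $(d+1)\times(d+1)$ determinant polynomials in those coordinates, and bound the number of realizable sign patterns by a Warren/Milnor--Thom-type theorem, so that shattering forces $2^n \le (O(nk^d))^{kd}$. The differences are cosmetic --- the paper uses the Oleinik--Petrovskii--Milnor--Thom bound $(50Dl/m)^m$ with degree-$d$ difference determinants and closes by directly checking that $t=8d^2k\log_2 k$ violates $t\le(7+\log_2 t+d\log_2 k)kd$ for $d,k\ge 3$, rather than your augmented barycentric determinants, Warren's theorem, and a bootstrap --- so your argument is correct in substance.
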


\begin{proof}
Assume that $A\subset \R^d$, $A := \{a^1\ldots, a^t\}$, is shattered by polytopes with at most $k$ vertices. Let $x^i:=(x_1^i,\ldots,x_d^i)$, $i=1,\ldots, k$, be the variables that represent the vertices of any such polytope. For each $a^j\in A$ and a $(d+1)$-tuple $x^{i_1},\ldots, x^{i_{d+1}}$ with $i_1< \ldots< i_{d+1}$, we define $2(d+1)$ polynomials of the following form. For $s =1,\ldots, d+1$ put
\begin{align*} P_{j;i_1,\ldots,i_{d+1}}^{s,s}(x^1,\ldots,x^k) \ &:=\ \det[x^{i_1}-x^{i_s},\ldots, x^{i_{s-1}}-x^{i_s},x^{i_{s+1}}-x^{i_s},\ldots, x^{i_{d+1}}-x^{i_s}],\\
P_{j;i_1,\ldots,i_{d+1}}^{s,0}(x^1,\ldots,x^k) \ &:=\ \det[x^{i_1}-a^{j},\ldots, x^{i_{s-1}}-a^{j},x^{i_{s+1}}-a^{j},\ldots, x^{i_{d+1}}-a^{j}].
\end{align*}

Each of these polynomials has degree $d$ and actually depends only on at most $d+1$ variables. Note that, for fixed $x^{i_1},\ldots,x^{i_{d+1}}$, $s$ and $a^j$, the signs of these two polynomials simply indicate the orientation of the simplices on the vertices $(x^{i_1},
\ldots, x^{i_{d+1}})$ and $(x^{i_1},
\ldots,x^{i_{s-1}},a^j,x^{i_{s+1}},
\ldots, x^{i_{d+1}})$. If these orientations coincide then $x^{i_s}$ and $a^{j}$ lie on the same side from the hyperplane spanned by $x^{i_1},
\ldots, x^{i_{s-1}},x^{i_s},
\ldots, x^{i_{d+1}}$. This simple observation implies the following claim.
\begin{cla} The point $a^j$ is contained in the convex hull $\mathrm{conv}(y^{i_1},\ldots, y^{i_{d+1}})$ for some fixed $y^{i_1},\ldots, y^{i_{d+1}}$ iff for each $s=1,\ldots, d+1$ either the signs of $P^{s,s}_{j,i_1,\ldots, i_{d+1}}(x^1,\ldots, x^k)$ and $P^{s,0}_{j,i_1,\ldots, i_{d+1}}(x^1,\ldots, x^k)$ coincide or $P^{s,0}_{j,i_1,\ldots, i_{d+1}}(x^1,\ldots, x^k)=0$ at any point satisfying $x^{i_r}=y^{i_r}$, $r=1,\ldots, d+1$. 
\end{cla}

The important conclusion derived from Claim 1 is that the containment of $a^j$ in ${\rm conv}(y^{i_1},\ldots, y^{i_{d+1}})$ is determined by the signs of the polynomials defined above.
In total, we have defined $(2d+2)\cdot t\cdot {k\choose d+1}$ polynomials of degree $d$ each and that depend on $kd$ coordinates.\footnote{Actually, we have a bit fewer different polynomials since, e.g., $P^{s,s}$ does not depend on $a^j$ and $P^{s,0}$ does not depend on $x^{i_s}$, but this is not important for our purposes.}

A vector $v\in \{0,\pm 1\}^l$ is called a {\it sign pattern} of a set $p_1,\ldots,p_l$ of polynomials if there exists a vector $z$ such that the sign of $p_i(z)$ is $v_i$ for all $i$. We use the following classical result from combinatorial algebraic geometry.

\begin{thm}[Oleinik--Petrovskii \cite{OP}, Milnor \cite{Mi}, Thom \cite{Th}] Let $p_1,\ldots, p_l$ be $m$-variate real polynomials of degree at most $D$. Then the number of different sign patterns that these polynomials have is at most $$\Big(50Dl/m\Big)^m.$$
\end{thm}

Using it for our situation, we get that the total number of sign patterns of the polynomials we defined is at most $$\Big(\frac{50d\cdot 2(d+1)t{k\choose d+1}}{kd}\Big)^{kd}<(100tk^{d})^{kd}<2^{(7+\log_2 t+d\log_2 k)kd}.$$

Let us relate sign patterns to the property of a point being contained in the polytope. By Caratheodory's theorem (see \cite{Mat}), a point $z$ is contained in a convex hull of a set $X$ of $k$ points if and only if there is a subset $Y\subset X$ of size $d+1$ such that $z\subset {\rm conv}(Y)$. Thus, a point $a^i$ is contained in ${\rm conv}(y^1,\ldots, y^k)$ if and only if there exists a set of indices $i_1,\ldots, i_{d+1}$ for which the conditions of Claim~1 are fulfilled. 

Crucially, the discussion in the previous paragraph implies that there exists a correspondence between sign patterns of our polynomials and subsets $A'\subset A$ that can be obtained by intersecting $A$ with a $k$-vertex polytope, moreover, each sign pattern corresponds to only one subset.\footnote{Different sign patterns may correspond to the same subset.} Thus, the total number of different subsets that we can obtain by intersecting $A$ with $k$-vertex polytopes is at most the number of sign patterns of our polynomials.
At the same time, in order to shatter $A$, we need to obtain all $2^t$ different subsets. Therefore, we have
$$2^t\le 2^{(7+\log_2 t+d\log_2 k)kd} \ \ \ \Leftrightarrow \ \ \ t\le (7+\log_2 t+d\log_2 k)kd.$$
The last inequality is violated for $t = 8d^2k\log_2 k$ and $d,k\ge 3$, which concludes the proof.
\end{proof}

It is natural to ask for the lower bounds. It is not difficult to show that the VC-dimension of $k$-vertex polytopes in $\mathbb R^d$ is at least $\frac 13 dk$ for $k\ge 2d\ge 4$. We describe one possible construction below. More precisely, we describe a construction of a set of $k(d-1)$ points in $\mathbb R^d$ that is shattered by polytopes with $k+d-1$ vertices. Let us sketch the construction. Take a unit circle and take $k$ points $x_1,\ldots, x_k$ on the circle. Replace each point $x_i$ by a small regular $(d-1)$-vertex simplex $S_i$ centered at $x_i$ and lying in the $(d-2)$-dimensional plane that is orthogonal to the circle and passes through $x_i$. Choose $S_i$ so that all of them are translates of each other. We claim that the set $A:=S_1\cup \ldots \cup S_k$ is shattered by $(k+d-1)$-vertex polytopes.

Indeed, take a $(d-2)$-plane orthogonal to the circle and passing through the center of the circle. Take $d-1$ points in this plane that form a huge regular $(d-1)$-simplex $S$ centered at the center of the circle. Choose $S$ so that it is a reflected copy of $S_i$ after translation and homothety (i.e., the centers of hyperfaces of $S_i$ form a homothetic translated copy of $S$). The set $S$ of $d-1$ vertices is common for all the polytopes that we shall use. Next, fix a subset $A'\subset A$ that we want to separate by a polytope. Put $S'_i:=A'\cap S_i$. Each $S'_i$ is a face of the simplex.  Consider the line $l_i$ passing through the center of the circle and the center of the face $S'_i$. For each $i$, take a point $y_i$ that lies on $l_i$ distance $\epsilon(|S_i'|)$ farther than the center of the face $S'_i$. It is not so difficult, although not straightforward, to see that, for appropriate choices of $\epsilon(|S'_i|)$ we have ${\rm conv} (S\cup y_i)\supset S'_i$ and that, moreover, $S_i\setminus S'_i\not\subset {\rm conv} (S\cup y_1\cup\ldots\cup y_k)$.

It is an exciting open problem to close the gap between the lower (roughly $dk$) and the upper (roughly $d^2k\log k$) bounds.
\\

{\sc Acknowledgements: } We thank Lee-Ad Gottlieb, Aryeh Kontorovich and Gabriel Nivasch for sharing the question and for several enlightening discussions.

\end{document}